\newtheorem{proposition}{Proposition}
\newtheorem{defin}{\bf Definition}
\newenvironment{proof}{\noindent{\bf Proof}}{$\diamond$}
\newenvironment{proof2}{\noindent{\bf Proof}}{}
\def\ga{\mbox{Ga}}
\def\iga{\mbox{IGa}}
\def\be{\mbox{Be}}
\def\ibe{\mbox{IBe}}
\def\bin{\mbox{Bin}}
\def\ghs{\mbox{GHS}}
\def\no{\mbox{N}}
\def\nbi{\mbox{NB}}
\def\po{\mbox{Po}}
\def\gsst{\mbox{GSSt}}
\def\un{\mbox{Un}}
\def\E{\mbox{E}}
\def\V{\mbox{Var}}
\def\Cov{\mbox{Cov}}
\def\Cor{\mbox{Corr}}
\def\d{\mbox{d}}
\def\rest{\mbox{rest}}
\def\bc{{\bf c}}
\def\bs{{\bf s}}
\def\bw{{\bf w}}
\def\bx{{\bf x}}
\def\by{{\bf y}}
\def\bC{{\bf C}}
\def\bM{{\bf M}}
\def\bS{{\bf S}}
\def\bY{{\bf Y}}
\def\bZ{{\bf Z}}
\def\bone{{\bf 1}}
\def\simind{\stackrel{\mbox{\scriptsize{ind}}}{\sim}}
\newcommand{\bfeta}{\boldsymbol{\eta}}
\newcommand{\GC}{\mathcal{G}}
\newcommand{\MC}{\mathcal{M}}
\newcommand{\RB}{\mathbb{R}}
\begin{document}

\baselineskip=24pt

\title{\bf Graphical models with marginals in the exponential family }
\author{{\sc Luis E. Nieto-Barajas \& Sim\'on Lunag\'omez} \\[2mm]
{\sl Department of Statistics, ITAM, Mexico} \\[2mm]
{\small {\tt luis.nieto@itam.mx \& simon.lunagomez@itam.mx}} \\}
\date{}
\maketitle

\begin{abstract}
Graphical models encode conditional independence statements of a multivariate distribution via a graph. Traditionally, the marginal distributions in a graphical model are assumed to be Gaussian. In this paper, we propose a three-level hierarchical model that functions as the hyper-Markov law that enables a graphical model with marginals in the exponential family with quadratic variance function. Inference on the model parameters is made using a Bayesian approach. We perform a simulation study and real data analyses to illustrate the usefulness of our models. 
\end{abstract}

\vspace{0.2in} \noindent {\sl Keywords}: Bayesian inference, exponential family, hierarchical models, graphical models, multivariate normal.

\section{Introduction}
\label{sec:intro}

The formulation of graphical models was originally proposed by \cite{giudici&green:99}. This consists of a hierarchical model of the form
\begin{equation}
\label{eq:ggm}
f(\bx, \Sigma, \GC) = f(\bx\mid \Sigma, \GC) f(\Sigma \mid \GC) f(\GC),
\end{equation}
where $f(\bx\mid\Sigma,\GC)$ is the likelihood, $f(\Sigma \mid \mathcal{G})$ is the hyper-Markov law (HML) and $f(\GC)$ is the hyper-prior on graph space. 
Probabilistic graphical models are usually characterised by a multivariate Gaussian distribution on random variables associated to each of the nodes, $f(\bx\mid\Sigma,\GC)$. The association among variables (nodes) is described through the variance covariance matrix $\Sigma$. The inverse of this matrix is known as precision matrix, where a zero entry at position $(i,k)$ induces conditional independence between the variables $j$ and $k$ and, therefore, a missing edge in the graph between the nodes $j$ and $k$. 

Estimating a precision matrix is not trivial since it has to be symmetric and positive definite. Most contributions in the area of graphical models have concentrated in proposing a way of estimating the precision matrix. In a Bayesian context, \cite{khare&al:18} proposed a generalised G-Wishart distribution. For high-dimensional graphs, \cite{gan&al:19} induce shrinkage and sparsity with a mixture of Laplace priors. The use of covariates has been studied by \cite{chen&al:16} who use them to adjust the location, and \cite{ni&al:22} who use them in the estimation of the precision matrix. 

As discussed in \cite{giudici&green:99}, the hardest component to generalise or modify in \eqref{eq:ggm} is the HML. A first attempt was made by \cite{pitt&al:06} who use a Gaussian copula regression model. In this paper, we consider a hierarchical model, originally conceived for spatio-temporal data, and turn it into a hyper-Markov Gaussian law. Based on a  similar hierarchical model, we go beyond Gaussianity and use members of the exponential family with quadratic variance to produce a new family of HML for graphical models. One of these members is the normal model. We additionally simplify the precision matrix estimation by constraining the dependence between variables to be non-negative. The exponential family HML is based on a three-level hierarchical model that guarantees that the marginal distribution of the nodes is invariant and the graphical dependence is driven by a latent variable that links each pair of nodes. 

The rest of the paper is as follows: In Section \ref{sec:gmodel} we re-write the Gaussian graphical model using a three-level hierarchical representation and study its properties. In Section \ref{sec:emodel} we rely on a similar hierarchical model to extend the HML to the exponential family with quadratic variance. We also present the six members of the family. Bayesian inference is presented in Section \ref{sec:bayes} and numerical studies are included in Section \ref{sec:app}, where we include simulation and real data studies. Section \ref{sec:disc} concludes with a discussion. 

Before we proceed, let us introduce some notation. $\po(\lambda)$ denotes a Poisson density with mean $\lambda$; $\bin(m,p)$ denotes a binomial density with number of trials $m$ and probability of success $p$; $\nbi(m,p)$ denotes a negative binomial density with number of failures $m$ and success probability $p$; $\un(a,b)$ denotes a uniform density in the interval $(a,b)$; $\ga(\alpha,\beta)$ denotes a gamma density with mean $\alpha/\beta$; $\iga(\alpha,\beta)$ denotes a inverse gamma density with mean $\beta/(\alpha-1)$; $\be(\alpha,\beta)$ denotes a beta density with mean $\alpha/(\alpha+\beta)$; $\ibe(\alpha,\beta)$ denotes an inverse beta density with mean $\alpha/(\beta-1)$; $\ghs(\mu,\alpha)$ denotes a generalised hyperbolic secant density with mean $\mu$ and precision $\alpha$ \citep{morris:82}; $\gsst(\mu,m)$ denotes a generalised scaled student t density with mean $\mu$ and precision $m$ \citep{morris:83}; $\no(m,c)$ denotes a univariate normal density with mean $m$ and precision $c$; $\no_p(\bM,\bC)$ denotes a $p$-variate normal density with mean vector $\bM_{p\times 1}$ and precision matrix $\bC_{p\times p}$;. The density evaluated at a specific point $x$, will be denoted, for instance for the normal case, as $\no(x\mid m,c)$.

\section{HML for the Gaussian model}
\label{sec:gmodel}

Recently, \cite{nieto:20} introduced a spatio-temporal dependence model with identically distributed normal marginal distributions and joint multivariate normal distribution. For each node $i$, they defined the variable of interest $Y_j$ and a latent variable $Z_j$, together with a common variable $W$ for all nodes. Their construction is based on a three-level hierarchical model, where the dependence among variables $\{Y_j\}$ is induced by selecting any subset of $\{Z_j\}$ and exploiting conjugacy properties of the normal-normal Bayesian model. 

Unlike \cite{nieto:20}, here we define a set of latent variables $\{Z_{i,j}\}$ that will play the role of weights for the edges between any pair of nodes $(i,j)$. In other words, these parameters encode the strenght of the dependence between variables given the conditional independence statements provided by the graph, which means that the hierarchical model works as the hyper-Markov law. To illustrate, let us consider a scenario with $n=3$ nodes as the one depicted in Figure \ref{fig:net3}. 

In principle, we allow all nodes $Y_j$, $j=1,\ldots,n$ to be linked via $Z_{jk}$, $j\neq k=1,\ldots,p$. In Figure \ref{fig:net3} each node $Y_j$ is linked to the other two nodes via the latent variables $Z_{jk}$ for $j\neq k=1,2,3$. In general, the new hierarchical construction for the three sets of variables $\bY=\{Y_j\}$, $\bZ=\{Z_{j,k}\}$ and $W$ is given by 
\begin{align}
\label{eq:normal}
\nonumber
W&\sim\no(m_0,c_0) \\
Z_{j,k}\mid W&\simind\no(w,c_{jk}) \\
\nonumber
Y_j\mid \bZ&\simind\no\left(\frac{c_0 m_0+\sum_{k\neq j}^n c_{jk}z_{jk}}{c_0+\sum_{k\neq j}^n c_{jk}},\,c_0+\sum_{k\neq j}^n c_{jk}\right),
\end{align}
for $j\neq k=1,\ldots,p$, where $Z_{jk}=Z_{kj}$ with probability one (w.p.1) and $c_{jk}\equiv c_{kj}$, so the number of elements in $\bZ$, and in $\bc=\{c_{jk}\}$, is only $n(n-1)/2$. The parameters of the model satisfy $m_0\in\RB$, $c_0>0$ and $c_{jk}\geq 0$. To illustrate, Figure \ref{fig:graphm} shows the hierarchical model of three levels \eqref{eq:normal} associated with Figure \ref{fig:net3}. 
The properties of the marginal distributions of $Y_j$'s and the dependence induced among $\bY$ are given in Proposition \ref{prop:normal}. 

\begin{proposition} 
\label{prop:normal}
Let $\bY$ be an $n$-dimensional vector whose joint distribution is defined by equations \eqref{eq:normal}. After marginalising the latent variables $\bZ$ and $W$, the joint distribution of $\bY$ is a multivariate normal, in notation $\bY\sim\no_n(\bM_0,\bC)$, with mean vector $\bM_0=m_0\bone$ and $\bone$ a vector of ones of dimension $n\times 1$ and precision matrix $\bC$ of dimension $n\times n$, with variance-covariance matrix $\bC^{-1}$ defined by variances $\V(Y_j)=1/c_0$ and correlations between $Y_j$ and $Y_k$ given by 
\begin{equation}
\label{eq:cor}
\Cor(Y_j,Y_k)=\frac{c_0 c_{jk}+\left(\sum_{l\neq j}^n c_{jl}\right)\left(\sum_{l\neq k}^n c_{kl}\right)}{\left(c_0+\sum_{l\neq j}^n c_{jl}\right)\left(c_0+\sum_{l\neq k}^n c_{kl}\right)},
\end{equation}
for $j\neq k=1,\ldots,p$. Moreover, the marginal distribution for each node is invariant and is the same as the latent $W$, that is, $Y_j\sim\no(m_0,c_0)$ for $j=1,\ldots,p$. 
\end{proposition}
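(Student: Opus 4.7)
\noindent\textbf{Proof plan for Proposition \ref{prop:normal}.}

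The plan is to exploit the fact that the three-level construction \eqref{eq:normal} is a linear-Gaussian hierarchy: $W$ is normal, each $Z_{jk}\mid W$ is normal with a mean that is an affine function of $W$, and each $Y_j\mid \bZ$ is normal with a mean that is an affine function of $\bZ$. Stacking the vectors $(W,\bZ,\bY)$, each conditional is a deterministic affine map of the parent plus an independent Gaussian noise, so the joint vector $(W,\bZ,\bY)$ is multivariate normal. Marginals of multivariate normals are multivariate normal, and hence $\bY$ is multivariate normal. It then suffices to compute $\E(Y_j)$, $\V(Y_j)$ and $\Cov(Y_j,Y_k)$ to identify its distribution completely.

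For the mean and marginal variance I would apply the tower property and the law of total variance, conditioning first on $\bZ$ and then on $W$. A direct calculation shows $\E(Y_j\mid W)=\{c_0 m_0+(\sum_{k\neq j}c_{jk})W\}/\{c_0+\sum_{k\neq j}c_{jk}\}$, from which $\E(Y_j)=m_0$. For $\V(Y_j)$, the law of total variance gives $\V(Y_j)=\E\{\V(Y_j\mid\bZ)\}+\V\{\E(Y_j\mid\bZ)\}$, and the second term is evaluated by conditioning on $W$; writing $S_j=\sum_{k\neq j}c_{jk}$, the contributions combine to $1/(c_0+S_j)+S_j/\{c_0(c_0+S_j)\}=1/c_0$. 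Since $Y_j$ is normal with mean $m_0$ and variance $1/c_0$, we get $Y_j\sim\no(m_0,c_0)$, proving marginal invariance.

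For the covariance between $Y_j$ and $Y_k$, I would use the fact that conditional on $\bZ$ the variables $Y_j$ and $Y_k$ are independent, so
\begin{equation*}
\Cov(Y_j,Y_k)=\Cov\bigl\{\E(Y_j\mid\bZ),\,\E(Y_k\mid\bZ)\bigr\}=\frac{1}{(c_0+S_j)(c_0+S_k)}\,\Cov\Bigl(\sum_{l\neq j}c_{jl}Z_{jl},\sum_{l\neq k}c_{kl}Z_{kl}\Bigr).
\end{equation*}
I would compute the inner covariance using the law of total covariance conditioning on $W$. Given $W$, the $Z_{jl}$'s are independent, and the two sums share exactly one common term, namely $Z_{jk}=Z_{kj}$, producing a conditional covariance of $c_{jk}^{2}\cdot(1/c_{jk})=c_{jk}$. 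The remaining contribution comes from the common mean $W$: both conditional expectations equal $S_jW$ and $S_kW$ respectively, giving $S_jS_k\,\V(W)=S_jS_k/c_0$. Combining and dividing by $\sqrt{\V(Y_j)\V(Y_k)}=1/c_0$ produces formula \eqref{eq:cor}.

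The main obstacle is bookkeeping. The construction identifies $Z_{jk}$ with $Z_{kj}$ with probability one, so when evaluating the conditional covariance of the two sums one must be careful to count this single shared latent variable exactly once (and not to confuse the precision $c_{jk}$ with the variance $1/c_{jk}$). The rest is a double application of the conditional variance/covariance formulas. Joint normality together with the computed first two moments identifies $\bY\sim\no_n(m_0\bone,\bC)$ where $\bC^{-1}$ has the diagonal and correlation structure stated in the proposition.
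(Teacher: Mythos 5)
Your proposal is correct and follows essentially the same route as the paper: joint normality of $(W,\bZ,\bY)$ from the linear-Gaussian hierarchy, followed by a double application of the law of total covariance (conditioning on $\bZ$, then on $W$) with the single shared term $Z_{jk}$ contributing $c_{jk}$ and the common anchor $W$ contributing $S_jS_k/c_0$. The only cosmetic difference is that the paper obtains the marginal $Y_j\sim\no(m_0,c_0)$ via the normal--normal conjugacy (prior-predictive) argument, whereas you compute $\E(Y_j)$ and $\V(Y_j)$ directly by the tower property and law of total variance; both are valid and your variance bookkeeping (in the precision parameterisation) checks out.
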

\begin{proof}
The $Z_{jk}$'s are conditional independent normals, given $W$, and $W$ is univariate normal, so after marginalising with respect to $W$, the joint distribution of $\bZ$ is multivariate normal with mean, variance and covariances obtained using iterative formulae as $\E(Z_{jk})=m_0$, $\V(Z_{jk})=(c_0+c_{jk})/(c_0c_{jk})$ and $\Cov(Z_{jk},Z_{j'k'})=1/c_0$ for $j\neq k \neq j'\neq k'$. Now, conditional on $\bZ$, the $Y_i$'s are independent, so $\bY\mid\bZ$ is multivariate normal, and since $\bZ$ is multivariate normal, the marginal distribution of $\bY$ is again multivariate normal. To obtain its parameters we first get the marginal distributions of each $Y_j$. Considering the third equation in \eqref{eq:normal}, the conditional distribution of $Y_j$ given $\bZ$ depends on $\sum_{k\neq j}c_{jk}Z_{jk}$ which conditionally on $W$ has $\no(w\sum_{k\neq j}c_{jk},1/\sum_{k\neq j}c_{jk})$ distribution. Integrating with respect to $W$, the marginal distribution of $\sum_{k\neq j}c_{jk}Z_{jk}$ is $\no(m_0\sum_{k\neq j}c_{jk},c_0/\{(\sum_{k\neq j}c_{jk})(c_0+\sum_{k\neq j}c_{jk})\})$. Relying
on conjugacy properties of the normal model, we obtain that $Y_j\sim\no(m_0,c_0)$ for $j=1,\ldots,p$. For the correlation we first obtain the covariance using iterative formulae. $\Cov(Y_j,Y_k)=\E\{\Cov(Y_j,Y_k\mid\bZ)\}+\Cov\{\E(Y_j\mid\bZ),\E(Y_k\mid\bZ)\}$. Since the $Y_j$'s are conditionally independent given $\bZ$, the first term is zero, and the second term after removing the constants becomes $\Cov(\sum_{l\neq j}c_{jl}Z_{jl},\sum_{l\neq k}c_{kl}Z_{kl})/\{(c_0+\sum_{l\neq j}c_{jl})(c_0+\sum_{l\neq k}c_{kl})\}$. Applying the iterative formulae for a second time, the numerator is $\E\{\Cov(\sum_{l\neq j}c_{jl}Z_{jl},\sum_{l\neq k}c_{kl}Z_{kl}\mid W)\}+\Cov\{\E(\sum_{l\neq j}c_{jl}Z_{jl}\mid W),\E(\sum_{l\neq k}c_{kl}Z_{kl}\mid W)\}$. The first term, after separating the sums in the common part, reduces to $\E\{\V(c_{jk}Z_{jk}\mid W)\}=c_{jk}$. The second term reduces to $(\sum_{l\neq j}c_{jl})(\sum_{l\neq k}c_{kl})\V(W)$. Since the $Y_j$'s and $W$ have the same distribution, dividing by the product of the standard deviations, we obtain the result.
\end{proof}

We note that the conditional distribution of each node $Y_j$ in \eqref{eq:normal}, depends on the links $Z_{jk}$ only through the product $c_{jk}Z_{jk}$, so if $c_{jk}=0$ the contribution of the corresponding link $Z_{jk}$ disappears. In the example depicted in Figure \ref{fig:net3}, if we make $c_{12}=0$ then $Y_1$ and $Y_2$ will only be linked through $Y_3$ (via $Z_{13}$ and $Z_{23}$). To fully understand what is going on, let us consider the variance-covariance matrix defined as 
$$\bC^{-1}=\frac{1}{c_0}\left(\begin{array}{ccc}
1 & \rho_{12} & \rho_{13} \\
\rho_{12} & 1 & \rho_{23} \\
\rho_{13} & \rho_{23} & 1 
\end{array}\right)$$ 
where correlations are given in Proposition \ref{prop:normal} and have the explicit form
\begin{eqnarray}
\nonumber
\rho_{12}=\{c_0c_{12}+(c_{12}+c_{13})(c_{12}+c_{23})\}/\{(c_0+c_{12}+c_{13})(c_0+c_{12}+c_{23})\}, \\ 
\label{eq:rhos}
\rho_{13}=\{c_0c_{13}+(c_{12}+c_{13})(c_{13}+c_{23})\}/\{(c_0+c_{12}+c_{13})(c_0+c_{13}+c_{23})\}, \\ 
\nonumber
\rho_{23}=\{c_0c_{23}+(c_{12}+c_{23})(c_{13}+c_{23})\}/\{(c_0+c_{12}+c_{23})(c_0+c_{13}+c_{23})\}. 
\end{eqnarray}
To obtain the precision matrix, we first calculate the determinant $\det(\bC^{-1})=D/c_0^3$, where $D=1+2\rho_{12}\rho_{13}\rho_{23}-\rho_{12}^2-\rho_{13}^2-\rho_{23}^2$. Then, we compute the cofactor matrix to obtain 
$$\bC=\frac{c_0}{D}\left(\begin{array}{ccc}
1-\rho_{23}^2 & \rho_{23}\rho_{13}-\rho_{12} & \rho_{12}\rho_{23}-\rho_{13} \\
\rho_{23}\rho_{13}-\rho_{12} & 1-\rho_{13}^2 &  \rho_{12}\rho_{13}-\rho_{23} \\
\rho_{12}\rho_{23}-\rho_{13} & \rho_{12}\rho_{13}-\rho_{23} & 1-\rho_{12}^2
\end{array}\right).$$

Now, if $c_{12}=0$ correlations \eqref{eq:rhos} become $\rho_{12}=c_{13}c_{23}/\{(c_0+c_{13})(c_0+c_{23})\}$, $\rho_{13}=c_{13}/(c_0+c_{13})$ and $\rho_{23}=c_{23}/(c_0+c_{23})$, where clearly $\rho_{12}=\rho_{13}\rho_{23}$. Therefore, the corresponding entry $(1,2)$ of the precision matrix $\bC$ has a value of zero. 

In general, in construction \eqref{eq:normal}, $Y_j$ and $Y_k$ are directly linked only through $Z_{jk}$. Moreover, the conditional distribution of $Y_j$ depends on the link $Z_{jk}$ through $c_{jk}Z_{jk}$. In turn, its conditional distribution, given $W$ is $\no(w c_{jk},1/c_{jk})$, then $\V(c_{jk}Z_{jk}\mid W)=c_{jk}$. When $c_{jk}=0$, this latter variance is zero, and thus $c_{jk}Z_{jk}=0$ w.p.1. Therefore,
$Y_j$ and $Y_k$ become conditionally independent given the rest of the nodes. 

In summary, we have defined an alternative way of writing a multivariate normal graphical model in terms of a three-level hierarchical model. There are several features that are worth mentioning. In a general Gaussian graphical model, the variances for each node can be different and the covariance between nodes can be negative or positive, however construction \eqref{eq:normal} implies that the variances are the same for all nodes and the covariances are non negative. On the other hand, 
with construction \eqref{eq:normal} we can easily implement a Gibbs sampler and do not require the computations of determinants or matrices inversions, which are usually very computationally costly, so our model can be scalable; additionally, construction \eqref{eq:normal} can be generalised to the exponential family with quadratic variance functions to define non-Gaussian graphical models, which are described in the following Section \ref{sec:emodel}.

\section{HML for the exponential family model}
\label{sec:emodel}

The hierarchical construction \eqref{eq:normal} is based on conjugacy properties of the normal-normal Bayesian model. The first distribution corresponds to the prior, the second to the likelihood, and the third to the posterior. There are several other models that admit conjugate distributions in Bayesian analysis. In fact, \cite{diaconis&ylvisaker:79} developed the theory of conjugate families for exponential families. Recently, \cite{nieto&gutierrez:22} generalised the spatio-temporal dependence models of \cite{nieto:20} to exponential families. Here we use their ideas to extend the normal graphical model \eqref{eq:normal} to exponential family graphical models. 

Let $Z_1,Z_2,\ldots,Z_n$ be a sample from an exponential family with density $f(z\mid\theta(\mu))=a(z)\exp\{\theta(\mu)s(z)-M(\theta(\mu))\}$, where $a(\cdot)$ is a nonnegative function, $\theta$ is the canonical parameter written in terms of the mean parameterisation $\mu\in\MC$ with $\MC$ the mean parameter space, $S(Z)$ is the canonical statistic and $M(\theta)$ is the cumulant transform such that $\mu=\E(S\mid\theta)=(\partial/\partial\theta)M(\theta)$ and $V(\mu)=\V(S\mid\theta)=(\partial^2/\partial\theta^2)M(\theta)$, where $V(\mu)$ is called the variance function written in terms of the mean $\mu$. \cite{morris:82} characterised six families whose variance function is quadratic, that is, 
\begin{equation}
\label{eq:qvf}
V(\mu)=\nu_0+\nu_1\mu+\nu_2\mu^2.
\end{equation}

Let $S_n=\sum_{i=1}^n S(Z_i)$ be the sufficient statistic for the exponential family density above, then the likelihood becomes $f(s_n\mid\theta,n)=b(s_n,n)\exp\{\theta(\mu)s_n-nM(\theta(\mu))\}$. The conjugate family for the parameter $\mu\in\MC$ is $f(\mu\mid s_0,n_0)=h(s_0,n_0)\exp\{\theta(\mu)s_0-n_0M(\theta(\mu))\}|J_\theta(\mu)|$, with parameters $s_0$ and $n_0$, where $|J_\theta(\cdot)|$ is the Jacobian of the transformation $\theta(\cdot)$. Moments of this prior have general expressions, in particular, $\E(\mu\mid s_0,n_0)=s_0/n_0$ \citep{diaconis&ylvisaker:79} and $\V(\mu\mid s_0,n_0)=V(s_0/n_0)/(n_0-\nu_2)$ \citep{morris:83}. The posterior distribution for $\mu$ has exactly the same form as the prior but with updated parameters $s^*=s_0+s_n$ and $n^*=n_0+n$. 

We are now in a position to define the general graphical model. Let $\bY=\{Y_j\}$, for $j=1,\ldots,p$ be the set of nodes of interest. Let $\bS=\{S_{jk}\}$ for $j\neq k=1,\ldots,p$ be a set of latent variables that play the role of links for any pair of nodes, such that $S_{kj}=S_{jk}$ w.p.1, and let $W$ be a common variable that plays the role of anchor. Then, the exponential family graphical model, with quadratic variance function, is defined by a three level hierarchical model of the form
\begin{align}
\nonumber
W\sim\; & f(w\mid s_0,c_0)=h(s_0,c_0)\exp\{\theta(w)s_0-c_0M(\theta(w))\}|J_\theta(w)| \\
\label{eq:expf}
S_{jk}\mid W\simind\; & f(s_{jk}\mid\theta(w),c_{jk})=b(s_{jk},c_{jk})\exp\{\theta(w)s_{jk}-c_{jk}M(\theta(w))\} \\
\nonumber
Y_j\mid\bS\simind\; & f(y_j\mid s_j^*,c_j^*)=h(s_j^*,c_j^*)\exp\{\theta(y_j)s_j^*-c_j^*M(\theta(y_j))\}|J_\theta(y_j)|, \\
\nonumber
& \mbox{with}\quad s_j^*=s_0+\sum_{k\neq j}s_{jk}\quad\mbox{and}\quad c_j^*=c_0+\sum_{k\neq j}c_{jk}
\end{align}
Parameters of the model are $s_0$, whose parameter space depends on the specific model, $c_0>0$ and $c_{jk}\geq 0$ with $c_{kj}\equiv c_{jk}$ for $j\neq k=1,\ldots,p$. The properties of the general construction \eqref{eq:expf} are given in the following proposition. 

\begin{proposition}
\label{prop:expf}
Let $\bY=\{Y_j\}$ be a network with $p$ nodes, whose distribution is characterized by equations \eqref{eq:expf}. Then, 
\begin{enumerate}
\item[(a)] The marginal distribution of each $Y_j$ is the same as the distribution of $W$, and therefore 
$\E(Y_j)=s_0/c_0$ and $\V(Y_j)=V(s_0/c_0)/(c_0-\nu_2)$ for all $j=1,\ldots,p$. 
\item[(b)] The correlation induced by the model, for any pair of nodes $(Y_j,Y_k)$ is given by \eqref{eq:cor}.
\item[(c)] If $c_{jk}=0$ then nodes $Y_j$ and $Y_k$ are conditionally independent. 
 \end{enumerate}
\end{proposition}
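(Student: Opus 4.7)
The plan is to mirror the proof of Proposition~\ref{prop:normal}, replacing appeals to normal--normal conjugacy by their exponential-family analogues due to \cite{diaconis&ylvisaker:79} and \cite{morris:83}. For part (a), I would exploit the fact that \eqref{eq:expf} defines $Y_j$ as a draw from the conjugate posterior of $\mu$ with prior $\pi(\cdot\mid s_0,c_0)$ and latent data $\{S_{jl}\}_{l\neq j}$; the $S_{lm}$'s not incident to node $j$ integrate to one since $Y_j\mid\bS$ depends on $\bS$ only through $s_j^*$ and $c_j^*$. Bayes' theorem then rewrites the remaining joint as $\pi(w\mid s_0,c_0)\prod_{l\neq j}f(s_{jl}\mid\theta(w),c_{jl}) = m(\bS_{\cdot j})\,\pi(w\mid s_j^*,c_j^*)$, so integrating out $W$ and $\bS_{\cdot j}$ collapses the marginal of $Y_j$ to $\pi(y_j\mid s_0,c_0)$, i.e.\ the same law as $W$. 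The mean and variance are then those already recorded above for the conjugate prior.

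For part (b), conditional independence of $Y_j$ and $Y_k$ given $\bS$ reduces the iterative covariance formula to $\Cov(Y_j,Y_k)=\Cov\{\E(Y_j\mid\bS),\E(Y_k\mid\bS)\}=\Cov(s_j^*,s_k^*)/(c_j^* c_k^*)$ via the posterior mean $\E(Y_j\mid\bS)=s_j^*/c_j^*$. A second application of the iterative formula, now conditioning on $W$, isolates the single common summand to give $\E\{\V(S_{jk}\mid W)\}=c_{jk}\E\{V(W)\}$ and an outer term $(\sum_{l\neq j}c_{jl})(\sum_{l\neq k}c_{kl})\V(W)$, using $\E(S_{jl}\mid W)=c_{jl}W$ and $\V(S_{jl}\mid W)=c_{jl}V(W)$ from the exponential-family cumulants. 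The Morris identities $\V(W)=V(s_0/c_0)/(c_0-\nu_2)$ and $\E\{V(W)\}=c_0V(s_0/c_0)/(c_0-\nu_2)$, the latter being a routine consequence of the quadratic form \eqref{eq:qvf}, then factor out a common $V(s_0/c_0)/(c_0-\nu_2)$ that matches $\V(Y_j)=\V(Y_k)$, and the correlation reduces to \eqref{eq:cor}.

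For part (c), observe that $\E(S_{jk}\mid W)=c_{jk}W$ and $\V(S_{jk}\mid W)=c_{jk}V(W)$ both vanish when $c_{jk}=0$, so $S_{jk}=0$ \wpr one; neither $s_j^*$ nor $s_k^*$ then depends on $S_{jk}$, eliminating the unique latent quantity that directly couples nodes $j$ and $k$ in the hierarchy, and the conditional-independence conclusion follows by exactly the same reasoning used just after the proof of Proposition~\ref{prop:normal}. The main obstacle I anticipate is handling the quadratic variance function identities of \cite{morris:83} cleanly in part (b): the step $\E\{V(W)\}=c_0V(s_0/c_0)/(c_0-\nu_2)$ is what allows the two additive terms $c_0c_{jk}$ and $(\sum_{l\neq j}c_{jl})(\sum_{l\neq k}c_{kl})$ to emerge with a common denominator matching $\V(Y_j)\V(Y_k)$, and without it the ratio will not collapse to exactly \eqref{eq:cor}; everything else parallels the Gaussian case essentially verbatim.
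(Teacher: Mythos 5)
Your proposal is correct and follows essentially the same route as the paper's own proof: part (a) is the same Bayes'-theorem/prior-predictive identity showing the marginal of $Y_j$ collapses to the conjugate prior, part (b) is the identical double application of the iterated covariance formula with the key identity $\E\{V(W)\}=c_0\V(W)$ derived from the quadratic variance function, and part (c) is the same degeneracy argument for $S_{jk}$ when $c_{jk}=0$. No substantive differences to report.
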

\begin{proof2}
\begin{enumerate}
\item[(a)] Using Bayes' Theorem, we can express the prior for $\mu$ in terms of the posterior as 
$$f(\mu\mid s_0,n_0)=\int f(\mu\mid s^*,n^*)f(s_n\mid s_0,n_0)\d s_n,$$
where $f(s_n\mid s_0,n_0)=\int f(s_n\mid\theta(\tilde\mu),n)f(\tilde\mu\mid s_0,n_0)\d\tilde\mu$ is the prior predictive for the sufficient statistic $S_n$ and $s^*=s_0+s_n$ and $n^*=n_0+n$. Substituting the latter expression into the former, we get
$$f(\mu\mid s_0,n_0)=\int f(\mu\mid s^*,n^*)\int f(s_n\mid\theta(\tilde\mu),n)f(\tilde\mu\mid s_0,n_0)\d\tilde\mu\,\d s_n.$$
Now replacing $(\mu,\tilde\mu,s^*,n^*,s_n,n,n_0)$ by $(y_j,w,s_j^*,c_j^*,\sum_{k\neq j}s_{jk},\sum_{k\neq j}c_{jk},c_0)$ and using Fubini's Theorem we get
$$f(y_j\mid s_0,c_0)=\int\int f(y_j\mid s_j^*,c_j^*) f(\sum_{k\neq j}s_{jk}\mid\theta(w),\sum_{k\neq j}c_{jk})f(w\mid s_0,c_0)\d w\,\d(\sum_{k\neq j}s_{jk}),$$
which is the marginal distribution of $Y_j$ implied by construction \eqref{eq:expf}. This proves that the marginal distribution of $Y_j$ is the same as that of $W$. The mean and variance follow from the exponential family properties. 
\item[(b)] We first recall that $\E(W)=s_0/c_0$, $\V(W)=V(s_0/c_0)/(c_0-\nu_2)$, $\E(S_{jk}\mid W)=c_{jk}w$, $\V(S_{jk}\mid W)=c_{jk}V(W)$, $\E(Y_j\mid\bS)=s_j^*/c_j^*$ and $\V(Y_j)=V(s_j^*/c_j^*)/(c_j^*-\nu_2)$, where the quadratic variance function $V(\cdot)$ is given in \eqref{eq:qvf}. Now, we compute the covariance between $Y_j$ and $Y_k$ using iterative formulae. $\Cov(Y_j,Y_k)=\E\{\Cov(Y_j,Y_k\mid\bS)\}+\Cov\{\E(Y_j\mid\bS),\E(Y_k\mid\bS)\}$. Since the $Y_j$'s are conditionally independent given $\bS$, the first term is zero, and the second term becomes $\Cov(S_j^*/c_j^*,S_k^*/c_k^*)=$ \linebreak $\Cov(\sum_{l\neq j}S_{jl},\sum_{l\neq k}S_{kl})/(c_j^*c_k^*)$, after removing the additive constants $s_0$. Applying the iterative formulae for a second time, the numerator is $\E\{\Cov(\sum_{l\neq j}S_{jl},\sum_{l\neq k}S_{kl}\mid W)\}+\Cov\{\E(\sum_{l\neq j}S_{jl}\mid W),\E(\sum_{l\neq k}S_{kl}\mid W)\}$. The only common term in the two sums is $S_{jk}$, therefore, the first term reduces to $\E\{\V(S_{jk}\mid W)\}=c_{jk}\E\{V(W)\}$. Calculating this last expected value and using the expression \eqref{eq:qvf}, we get $\E\{V(W)\}=c_0\V(W)$. The second term of the latter covariance reduces to $(\sum_{l\neq j}c_{jl})(\sum_{l\neq k}c_{kl})\V(W)$. Reassembling the original covariance we obtain, $$\Cov(Y_j,Y_k)=\frac{c_0c_{jk}+(\sum_{l\neq j}c_{jl})(\sum_{l\neq k}c_{kl})}{c_j^*c_k^*}V(W).$$ Finally, since $\V(Y_j)=\V(Y_k)=\V(W)$ we obtain the result. 
\item[(c)] If $c_{jk}=0$ then $\V(S_{jk}\mid W)=0$, therefore the latent link $S_{jk}=0$ w.p.1, and since $S_{jk}$ is the only direct link between $Y_j$ and $Y_k$, these latter two become conditionally independent. $\diamond$
\end{enumerate}
\end{proof2}

The exponential family graphical model \eqref{eq:expf} is defined by links denoted as $S_{ij}$, whereas in the normal graphical model \eqref{eq:normal} the links are denoted as $Z_{ij}$. The change of notation is because we would have to define $S_{ij}=c_{ij}Z_{ij}$ to make both constructions equivalent. We decided to use $Z_{ij}$ in the normal case because, in this case, equations \eqref{eq:normal} have the traditional form of the normal-normal conjugate model in a Bayesian analysis. 

There are six particular cases of members of the exponential family whose variance function is quadratic, as in \eqref{eq:qvf}. These are characterised by their functions $b(\cdot,\cdot)$, $h(\cdot,\cdot)$, $\theta(\cdot)$, $M(\cdot)$ and $J_\theta(\cdot)$. So the particular non Gaussian graphical models for the nodes $\{Y_j\}$, $j=1,\ldots,p$ that are derived from the three level hierarchical model \eqref{eq:expf} are:
\begin{enumerate}
\item[(i)] \textbf{Normal}: A normal graphical model uses links whose conditional distribution is again normal. The variance function is $V(\mu)=1$ and the parameter space is $\MC=\RB$. In particular, $b(s_{jk},c_{jk})=\left(2\pi c_{jk}\right)^{-1/2}\exp\left\{-{s_{jk}^2}/{(2c_{jk})}\right\}$, $h(s_0,c_0)=\left(2\pi/c_0\right)^{-1/2}\exp\left\{-s_0^2/(2c_0)\right\}$, $\theta(w)=w$, $M(\theta(w))=w^2/2$ and $J_\theta(w)=1$. In summary,
\begin{equation}
\nonumber
W\sim\no(s_0/c_0,c_0),\quad S_{jk}\mid W \sim \no(c_{jk}w,1/c_{jk})\quad\mbox{and}\quad Y_{j}\mid\bS \sim \no(s_j^*/c_j^*, c_j^*). 
\end{equation} 
Note that in \eqref{eq:normal} $m_0=s_0/c_0$ and $Z_{jk}=S_{jk}/c_{jk}$ to make both constructions equivalent. 
\item[(ii)] \textbf{Gamma}: A gamma graphical model uses links whose conditional distribution is Poisson. The variance function is $V(\mu)=\mu$ and the parameter space is $\MC=\RB^+$. In particular, $b(s_{jk},c_{jk})=c_{jk}^{s_{jk}}/s_{jk}!$, $h(s_0,c_0)=c_0^{s_0}/\Gamma(s_0)$, $\theta(w)=\log(w)$, $M(\theta(w))=w$ and $J_\theta(w)={1}/{w}$. In summary,
\begin{equation}
\nonumber
W\sim\ga(s_0,c_0),\quad S_{jk}\mid W \sim \po(c_{jk} w)\quad\mbox{and}\quad Y_{j}\mid\bS \sim \ga(s_j^*, c_j^*). 
\end{equation}
\item[(iii)] \textbf{Inverse Gamma}: An inverse gamma graphical model uses links whose conditional distribution is gamma. The variance function is $V(\mu)=\mu^2$ and the parameter space is $\MC=\RB^+$. In particular, $b(s_{jk},c_{jk})=s_{jk}^{c_{jk}-1}/\Gamma(c_{jk})$, $h(s_0,c_0)=s_0^{c_0}/\Gamma(c_0)$, $\theta(w)=-{1}/{w}$, $M(\theta(w))=\log(w)$ and $J_\theta(w)={1}/{w}$. In summary,
\begin{equation}
\nonumber
W\sim\iga(c_0+1,s_0),\quad S_{jk} \mid W \sim \ga(c_{jk},1/w)\quad\mbox{and} \quad 
Y_{j} \mid \bS \sim \iga(c_j^*+1, s_j^*).
\end{equation}
\item[(iv)] \textbf{Beta}: A beta graphical model uses links whose conditional distribution is binomial. The variance function is $V(\mu)=\mu-\mu^2$ and the parameter space is $\MC=(0,1)$. In particular, $b(s_{jk},c_{jk})={c_{jk}\choose s_{jk}}$, $h(s_0,c_0)=\Gamma(c_0)/\{\Gamma(s_0)\Gamma(c_0-s_0)\}$, $\theta(w)=\log\left\{{w}/{(1-w)}\right\}$, $M(\theta(w))=-\log(1-w)$ and $J_\theta(w)={1}/{\{w(1-w)\}}$. In summary,
\begin{equation}
\nonumber
W\sim\be(s_0,c_0-s_0),\quad S_{jk} \mid W \sim \bin(c_{jk},w) \quad \mbox{and} \quad 
Y_j \mid \bS \sim \be(s_j^*,c_j^*-s_j^*). 
\end{equation}
\item[(v)] \textbf{Inverse Beta}: An inverse beta graphical model uses links whose conditional distribution is negative binomial. The variance function is $V(\mu)=\mu+\mu^2$ and the parameter space is $\MC=\RB^+$. In particular, $b(s_{jk},c_{jk})={c_{jk}+s_{jk}-1\choose s_{jk}}$, $h(s_0,c_0)=\Gamma(s_0+c_0+1)/\{\Gamma(s_0)\Gamma(c_0+1)\}$, $\theta(w)=\log\left\{{w}/{(w+1)}\right\}$, $M(\theta(w))=\log(w+1)$ and $J_\theta(w)={1}/{\{w(w+1)\}}$. In summary,
\begin{equation}
\nonumber
W \sim \ibe(s_0,c_0+1),\quad S_{jk}\mid W \sim \nbi(c_{jk},1/(w+1))\quad\mbox{and}\quad Y_j\mid\bS \sim \ibe(s_j^*,c_j^*+1). 
\end{equation}
\item[(vi)] \textbf{Generalised Scaled Student}: A generalised scaled student graphical model uses links whose conditional distribution is generalised hyperbolic secant. The variance function is $V(\mu)=1+\mu^2$ and the parameter space is $\MC=\RB$. In particular, $b(s_{jk},c_{jk})=\{{2^{c_{jk}-2}}/{\Gamma(c_{jk})}\} \prod_{l=0}^\infty\left\{1+{s_{jk}^2}/{(c_{jk}+2l)^2}\right\}^{-1}$, $h(s_0,c_0)$ is a normalising constant, $\theta(w)=\tan^{-1}(w)$, $M(\theta(w))=({1}/{2})\log\left(1+w^2\right)$ and $J_\theta(w)=\left(1+w^2\right)^{-1}$. In summary,
\begin{equation}
\nonumber
W\sim \gsst(s_0/c_0,c_0),\quad S_{jk}\mid W \sim \ghs(c_{jk} w,1/c_{jk})\quad\mbox{and}\quad Y_j\mid\bS \sim \gsst(s_j^*/c_j^*, c_j^*).  
\end{equation}
\end{enumerate}

We therefore have a set of different graphical models with support in the real numbers (i) and (vi) cases, positive numbers (ii), (iii) and (v) cases, and with bounded support in the unit interval (iv) case. 

\section{Bayesian inference}
\label{sec:bayes}


Let $\bY_i'=(Y_{i1},\ldots,Y_{ip})$, $i=1,\ldots,n$ be a sample of $n$ replicates from the network \eqref{eq:expf} with $p$ nodes 
Since the law of the graph is given by a three-level hierarchical model, we assume that together with $\bY_i$ we have also observed latent variables $\bS_i$ and $W_i$, as in a data augmentation technique \citep{tanner:91}. Therefore the extended likelihood for $\bfeta_i=(s_{0},c_{0},\bc_i)$ in terms of $(\bY_i,\bS_i,W_i)$, for $i=1\ldots,n$ is of the form 
\begin{equation}
\label{eq:lik}
f(\by,\bs,\bw\mid\bfeta)=\prod_{i=1}^n\left[\left\{\prod_{j=1}^p f(y_{ij}\mid s_{ij}^*,c_{ij}^*)\right\}\left\{\prod_{j<k}^p f(s_{ijk}\mid \theta(w_i),c_{ijk})\right\}f(w_i\mid s_{0},c_{0})\right].
\end{equation}

Our main parameters of interest are $c_{jk}$, which determine whether a link is present ($c_{ijk}>0$) or not ($c_{ijk}=0$). To avoid further complexities in the model, we consider that $c_{jk}$ are positive parameters and assign them a prior distribution with positive support. In the end, we will report the estimated value of $c_{jk}$, which will be interpreted as an intensity of the edge that links nodes $j$ and $k$. 

Specifically, the prior distributions for these parameters are of the form: $c_{jk}\sim\ga(a_0,b_0)$ for the normal, gamma, inverse gamma and generalised scaled student cases; and $c_{jk}\sim\po(d_0)$ for the beta and inverse beta cases. 

The other two parameters $s_0$ and $c_0$ control the marginal distribution of each of the nodes $Y_{ij}$ for $j=1,\ldots,p$. In particular, $\E(Y_{ij})=s_{0}/c_{0}$. Typical graphical Gaussian models assume a zero mean \citep{gan&al:19}, which is not always possible in our exponential family models. In particular, we take $c_0\sim\ga(a_c,b_c)$ for all cases, and $s_0\sim\no(\mu_s,\tau_s)$ for the normal and generalised scaled student cases; $s_0\sim\ga(a_s,b_s)$ for the gamma, inverse gamma and inverse beta cases; and $s_0\mid c_0\sim\un(0,c_0)$ for the beta case. 

Note that we do not require further constraints so that the variance-covariance matrix of $\bY_i$ is positive defined, as is common in Gaussian graphical models. 

Posterior distribution of model parameters will be characterised through the full conditional distributions, which are given as follows. 
\begin{enumerate}
\item[(I)] The full conditional distribution of $s_{0}$ becomes
$$f(s_{0}\mid\rest)\propto\left\{\prod_{i=1}^n\prod_{j=1}^p h(s_{ij}^*,c_{ij}^*)\right\}\left\{h(s_{0},c_{0})\right\}^n\exp\left[s_{0}\sum_{i=1}^n\left\{\sum_{j=1}^p\theta(y_{ij})+\theta(w_i)\right\}\right]f(s_{0}),$$
where $f(s_{0})$ is the prior distribution according to the choice of model (i)--(vi). 
\item[(II)] The full conditional distribution of $c_{0}$ becomes
$$f(c_{0}\mid\rest)\propto\left\{\prod_{i=1}^n\prod_{j=1}^p h(s_{ij}^*,c_{ij}^*)\right\}\left\{h(s_{0},c_{0})\right\}^n\exp\left[-c_{0}\sum_{i=1}^n\left\{\sum_{j=1}^p M(\theta(y_{ij}))+M(\theta(w_i))\right\}\right]$$
$$\hspace{-6.5cm}\times\,\ga(c_{0}\mid a_c,b_c).$$
\item[(III)] The full conditional distribution for $c_{jk}$, $j<k=1,\ldots,p$,  becomes
$$\hspace{-7cm}f(c_{jk}\mid\rest)\propto\left\{\prod_{i=1}^n h(s_{ij}^*,c_{ij}^*)h(s_{ik}^*,c_{ik}^*)b(s_{ijk},c_{ijk})\right\}$$
$$\hspace{1.2cm}\times\,\exp\left[-\sum_{i=1}^n c_{ijk}\left\{M(\theta(y_{ij}))+M(\theta(y_{ik}))+M(\theta(w_i))\right\}\right]f(c_{jk}\mid a_0,b_0).$$
\end{enumerate}
In all cases, $s_{ij}^*=s_0+\sum_{k\ne j}s_{ijk}$ and $c_{ij}^*=c_0+\sum_{k\ne j}c_{ijk}$. 

Recall that we have assumed that we have observed the latent variables $\bS_i$ and $W_i$, so to be able to make inference we have to sample from the posterior conditional distributions of the latent variables, which are given as follows. 
\begin{enumerate}
\item[(IV)] The full conditional distribution of $S_{ijk}$, $i=1,\ldots,n$, $j<k=1,\ldots,p$ becomes
$$f(s_{ijk}\mid\rest)\propto h(s_{ij}^*,c_{ij}^*)h(s_{ik}^*,c_{ik}^*)b(s_{ijk},c_{ijk})\exp\left[s_{ijk}\left\{\theta(y_{ij})+\theta(y_{ik})+\theta(w_i)\right\}\right].$$
\item[(V)] The full conditional distribution of $W_i$, $i=1,\ldots,n$ becomes
$$f(w_i\mid\rest)= h(s_i^*,c_i^*)\exp\left\{\theta(w_i)s_i^*-c_i^* M(\theta(w_i))\right\}|J_\theta(w_i)|,$$
where $s_i^*=s_{0}+\sum_{j<k}s_{ijk}$ and $c_i^*=c_{0}+\sum_{j<k}c_{ijk}$. Due to conjugacy, this conditional distribution has the same form as $f(w_i\mid s_0,c_0)$, first equation in \eqref{eq:expf}, but with updated parameters. 
\end{enumerate}

Posterior inference will therefore require the implementation of a Gibbs sampler \citep{smith&roberts:93} with conditional distributions (I)--(V). Whenever is not possible to sample directly from the posterior conditional, we implement a random walk Metropolis-Hastings step \citep{tierney:94}. Alternatively, since apart from the generalised scaled student case, the other exponential family members are all of standard form, we can perform inference using JAGS \citep{plummer:23}. JAGS code for models (i), (ii) and (iii) can be found in the Appendix.

\section{Numerical illustrations}
\label{sec:app}

\subsection{Simulated data}

To illustrate the performance of our model and the way we are going to report inferences, we first consider a control scenario where we sample data $Y_{ij}$ from the inverse gamma graphical model (iii) with parameters $s_0=6$, $c_0=10$ and $c_{jk}\in\{e^{2},e^{-2}\}$ according to whether regions $j$ and $k$ are or are not neighbours, respectively. To be specific, we consider a hypothetical region with $p=5$ areas as the one depicted in Figure \ref{fig7:region}. Here, only the pairs of regions $(1,3)$, $(1,5)$ and $(3,4)$ are not neighbours, the rest are. The chosen sample size is $n=500$. 

This setting defines a network with $p=5$ nodes and edges with intensity $c_{jk}$. To show this in a graph, we associate the value of $c_{jk}$ with a gray intensity by normalising each $c_{jk}$ dividing by $\max_{j,k}c_{jk}$, multiplying by 100 and rounding to two decimal places. In doing this, our controlled scenario corresponds to the network shown in the first panel of Figure \ref{fig:simnet}. We can clearly see that the only edges not shown are those for which the value of $c_{jk}$ is very close to zero. 

We fitted an inverse gamma graphical model with prior specifications given by: $a_0=b_0=1$, $a_c=b_c=0.01$ and $a_s=b_s=0.01$. Implementation was done in JAGS with 105,000 iterations, 15,000 as burn-in and a thinning of 40. We also ran two parallel chains to assess convergence. The running time was 3 hours in a a core i7 at 3.40 GHz and 32 GB of RAM. 

The posterior estimated values for $c_{jk}=0.14$ are: $c_{13}\in(0.11,1.84)$, $c_{15}\in(0.07,2.48)$ and $c_{34}\in(0.13,2.57)$ with 95\% of probability; and posterior estimates for $c_{jk}=7.39$ are: $c_{12}\in(3.58,9.57)$, $c_{14}\in(3.49,9.16)$, $c_{23}\in(2.92,8.58)$, $c_{24}\in(2.88,9.46)$, $c_{25}(3.43,9.76)$, $c_{35}\in(2.63,7.41)$ and $c_{45}\in(2.62,9.07)$ with 95\% of probability. This confirms that, although the intervals are a little wide, the inferential procedure is capturing the true values. Finally, we also produce the inferred network using the posterior mean values as point estimates for $c_{jk}$. This graph is shown in the right panel on the same Figure \ref{fig:simnet}. We can see that the estimated network is very similar to the true one. 

\subsection{Sunspots data}

The number of sunspots on a given day has been studied since the year 1610. The monthly mean sunspots numbers have been reported in \cite{andrews&herzberg:85} since 1749 until 1983. We consider the twelve months as nodes in a graph, i.e. $p=12$ and the years as replicates of the counts for $i=1,\ldots,235$. Our objective is to identify the conditionally dependence strength among months. 

Data are shown as monthly box plots in Figure \ref{fig:sunbox}. Monthly distributions are stationary and right-skewed, which suggests that the inverse gamma or gamma models are good options to describe their behaviour. For both models, prior distributions were defined as in the simulation study. Inference was done in JAGS with two chains defined with a total of 70,000 iterations, 10,000 as burn-in and a thinning of 40. The running time was 11.5 hours for the inverse gamma model and 6.5 hours for the gamma model, almost half the time. 

To compare the fit of the models to the data, we computed the mean square error defined as $MSE=\frac{1}{np}\sum_{i=1}^n\sum_{j=1}^p\left(Y_{ij}^F-Y_{ij}\right)^2$, where $Y_{ij}^F$ is the predicted value associated to observation $Y_{ij}$. The posterior expected values of $MSE$, in the inverse gamma model, is $10.70$ and, in the gamma model, is $2.44$. This suggests that the latter provides the best fit. 

Instead of reporting the estimated conditional dependence parameters $c_{jk}$, we normalize them by their maximum estimated value and produce a network. The inferred networks with both models are included in Figure \ref{fig:sunnet}. Although the best fit is obtained with the gamma model, both networks are similar. We see that there are some pairs of months whose conditional dependence is stronger, say May-June, July-August, September-October, and November-December. In a second degree of conditionally dependence we have the first three months of the year, January-February-March. What is surprising is that December-January are almost conditionally independent, in some way the first month of the year establishes a new beginning in the number of sunspots, which is conditionally independent of the number of spots at end of the previous year. Considering nonadjacent months, the only edges that can not be disregarded are those that link April-July, April-June and June-August. 

\subsection{Glucose data}

For early detection of diabetis mellitus, an oral glucose tolerance test is performed. The test is subject to considerable variation in pregnant and non-pregnant women. A set of women underwent annual glucose tolerance tests for a period of three years. Measurements were taken at fasting and one hour after an intake of 100 grams of glucose. We have a total of $p=6$ measurements, where the first three were taken at fasting and the last three after glucose intake. The data \citep{andrews&herzberg:85} is split into two parts. The first part consists of $n_1=53$ non-pregnant women and the second part consists of $n_2=52$ pregnant women. The objective is to identify conditionally independent differences among the six nodes in pregnant and non-pregnant women. 

To make the distribution of the six nodes stationary, we standardized each of the six variables and added the value of 3 to move the numbers away from zero. Since standardization is a linear transformation, this does not affect the correlations. The box plots of the six variables are included in Figure \ref{fig:glubox}, for non-pregnant women (left panel) and pregnant women (right panel). In both datasets the box plots look very symmetric; therefore, apart from the inverse gamma and the gamma models, we will also consider the normal model. 

Again, the prior distributions were the same as in the simulation study. Inference was implemented in JAGS with the chains specified as for the sunspots data. The running times were around 27 minutes for the inverse gamma model, 20 minutes for the gamma model, and 6 minutes for the normal model. 

The posterior expected values of $MSE$s for each model in both datasets are given in Table \ref{tab:glumse}. For both datasets, the worst fit is obtained by the inverse gamma model, followed by the gamma model in second place, and the best fit is obtained by the normal model. This latter is not only the best, but also the fasted to fit. 

Our posterior summary, for conditionally dependent parameters $c_{jk}$, is presented in Figure \ref{fig:glunet}, for the gamma model (left column), normal model (right column), non-pregnant women dataset (top row) and pregnant women dataset (bottom row). Again, inferences obtained with both models are similar. 

For the non-pregnant women, conditionally dependencies appear everywhere measurements 1-4, 2-5 and 3-6, which were made before and after glucose intake, are dependent, as well as measurements in consecutive years after glucose intake 4-5 and 4-6. 

For pregnant women, the highest dependence occurs between measurements 4-5, both of which were made after glucose intake, the rest of the dependencies are a lot lower. This confirms that variability and dependence in glucose measurement tests is different in non-pregnant and pregnant women. 

\section{Discussion}
\label{sec:disc}

We have successfully proposed new likelihoods for the probabilistic study of graphs. Our construction relies on the inclusion of pairwise latent variables, each with a specific edge parameter. Although we increment the number of random quantities in the model, the advantage is that posterior simulation is very efficient, avoiding the inclusion of positive-definiteness constraints. 

Exponential family members with quadratic variance are well-known distributions, apart from one. This makes the implementation straightforward in simple code like JAGS, which can be run through R-package \citep{r:24}. Depending on the size of the nodes $p$ and the sample size $n$, running times can increase. 

Future directions of study are the inclusion of covariates or the analysis of other sampling schemes such as time series data. 

\section*{Acknowledgements}

This work was supported by \textit{Asociaci\'on Mexicana de Cultura, A.C.}

\bibliographystyle{natbib}

\section*{Appendix}

JAGS code for graphical models (i), (ii) and (iii). 
\footnotesize{ \begin{verbatim}
# ---------------------------------------------------------------
# Bugs code graphical model with normal marginals
model{
#Likelihood
  for (i in 1:n) {
    w[i] ~ dnorm(s0/c0,c0)
    s[i,1,1] <- 0
    for (j in 1:(p-1)){
    s[i,j+1,j+1] <- 0
    for (k in (j+1):p){
      s[i,j,k] ~ dnorm(c[j,k]*w[i],1/c[j,k])
      s[i,k,j] <- s[i,j,k]
    }
    }
    for (j in 1:p){
      ay[i,j] <- s0+sum(s[i,j,1:p])
      by[i,j] <- c0+sum(c[j,1:p])
      y[i,j] ~ dnorm(ay[i,j]/by[i,j],by[i,j])
    }
  }
#Prior
  c0 ~ dgamma(0.01,0.01)
  s0 ~ dnorm(0,0.01)
  c[1,1] <- 0
  for (j in 1:(p-1)){
    c[j+1,j+1] <- 0
    for (k in (j+1):p){
      c[j,k] ~ dgamma(a0,b0)
      c[k,j] <- c[j,k]
    }
  }
#Prediction
  for (i in 1:n){
  for (j in 1:p){
    yf[i,j] ~ dnorm(ay[i,j]/by[i,j],by[i,j])
    dif[i,j] <- pow(y[i,j]-yf[i,j],2)
  }
  }
  mse <- mean(dif[,])
}
\end{verbatim}}

\footnotesize{ \begin{verbatim}
# ---------------------------------------------------------------
# Bugs code graphical model with gamma marginals
model{
#Likelihood
  for (i in 1:n) {
    w[i] ~ dgamma(s0,c0)
    s[i,1,1] <- 0
    for (j in 1:(p-1)){
    s[i,j+1,j+1] <- 0
    for (k in (j+1):p){
      s[i,j,k] ~ dpois(c[j,k]*w[i])
      s[i,k,j] <- s[i,j,k]
    }
    }
    for (j in 1:p){
      ay[i,j] <- s0+sum(s[i,j,1:p])
      by[i,j] <- c0+sum(c[j,1:p])
      y[i,j] ~ dgamma(ay[i,j],by[i,j])
    }
  }
#Prior
  c0 ~ dgamma(0.01,0.01)
  s0 ~ dgamma(0.01,0.01)
  c[1,1] <- 0
  for (j in 1:(p-1)){
    c[j+1,j+1] <- 0
    for (k in (j+1):p){
      c[j,k] ~ dgamma(a0,b0)
      c[k,j] <- c[j,k]
    }
  }
#Prediction
  for (i in 1:n){
  for (j in 1:p){
    yf[i,j] ~ dgamma(ay[i,j],by[i,j])
    dif[i,j] <- pow(y[i,j]-yf[i,j],2)
  }
  }
  mse <- mean(dif[,])
}
\end{verbatim}}

\footnotesize{ \begin{verbatim}
# ---------------------------------------------------------------
# Bugs code for graphical model with inverse gamma marginals
# Input data has to be yy[i,j] = 1 / y[i,j], where y[i,j] are actual observations
model{
#Likelihood
  for (i in 1:n) {
    ww[i] ~ dgamma(c0+1,s0)
    w[i] <- 1/ww[i]
    s[i,1,1] <- 0
    for (j in 1:(p-1)){
    s[i,j+1,j+1] <- 0
    for (k in (j+1):p){
      s[i,j,k] ~ dgamma(c[j,k],ww[i])
      s[i,k,j] <- s[i,j,k]
    }
    }
    for (j in 1:p){
      ay[i,j] <- c0+sum(c[j,1:p])+1
      by[i,j] <- s0+sum(s[i,j,1:p])
      yy[i,j] ~ dgamma(ay[i,j],by[i,j])
      y[i,j] <- 1/yy[i,j]
    }
  }
#Prior
  c0 ~ dgamma(0.01,0.01)
  s0 ~ dgamma(0.01,0.01)
  c[1,1] <- 0
  for (j in 1:(p-1)){
    c[j+1,j+1] <- 0
    for (k in (j+1):p){
      c[j,k] ~ dgamma(a0,b0)
      c[k,j] <- c[j,k]
    }
  }
#Prediction
  for (i in 1:n){
  for (j in 1:p){
    yyf[i,j] ~ dgamma(ay[i,j],by[i,j])
    yf[i,j] <- 1/yyf[i,j]
    dif[i,j] <- pow(y[i,j]-yf[i,j],2)
  }
  }
  mse <- mean(dif[,])
}
\end{verbatim}}


\begin{table}
\centering
\begin{tabular}{lccc} \hline\hline 
Dataset & IGamma & Gamma & Normal \\ \hline 
Non-pregnant & 1.68 & 0.97 & 0.82 \\
Pregnant & 1.61 & 1.09 & 0.98 \\
\hline\hline
\end{tabular}
\caption{Posterior expected $MSE$ for the two datasets and different models.} 
\label{tab:glumse}
\end{table}


\begin{figure}[ht]
\setlength{\unitlength}{1.3cm}
\begin{center}
\begin{picture}(5,5)
\put(2.55,4){$Y_1$}
\put(4.05,1){$Y_2$}
\put(1.03,1){$Y_3$}
\put(2.7,4.1){\circle{0.6}}
\put(4.2,1.1){\circle{0.6}}
\put(1.2,1.1){\circle{0.6}}
\put(1.35,1.35){\line(1,2){1.22}}
\put(1.5,1.1){\line(1,0){2.4}}
\put(2.8,3.8){\line(1,-2){1.22}}
\put(2.5,0.7){$Z_{2,3}$}
\put(3.7,2.3){$Z_{1,2}$}
\put(1.2,2.4){$Z_{1,3}$}
\end{picture}
\end{center}
\vspace{-1.2cm}
\caption{Graphical representation of a network of size $n=3$.}
\label{fig:net3} 
\end{figure}
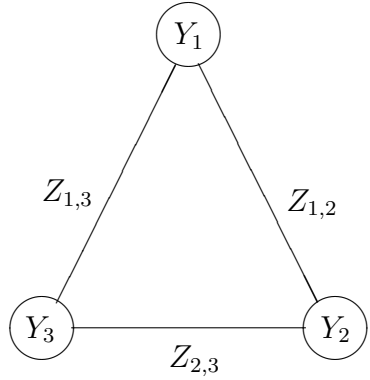

\begin{figure}[h]
\setlength{\unitlength}{0.8cm}
\begin{center}
\begin{picture}(5,7)
\put(2.5,6.6){$W$} 
\put(2.7,6.3){\vector(-2,-3){1.8}}
\put(2.7,6.3){\vector(0,-1){2.6}}
\put(2.7,6.3){\vector(2,-3){1.8}}
\put(0.4,3.1){$Z_{12}$}
\put(2.5,3.1){$Z_{13}$}
\put(4.4,3.1){$Z_{23}$}
\put(4.5,0.1){$Y_3$}
\put(2.5,0.1){$Y_2$}
\put(0.4,0.1){$Y_1$}
\put(0.6,2.8){\vector(0,-1){2.0}}
\put(0.6,2.8){\vector(1,-1){2.0}}
\put(2.7,2.8){\vector(-1,-1){2.0}}
\put(2.7,2.8){\vector(1,-1){2.0}}
\put(4.8,2.8){\vector(-1,-1){2.0}}
\put(4.8,2.8){\vector(0,-1){2.0}}
\end{picture}
\end{center}
\caption{Graphical representation of hierarchical model \eqref{eq:normal} for network in Figure \ref{fig:net3}.}
\label{fig:graphm} 
\end{figure}
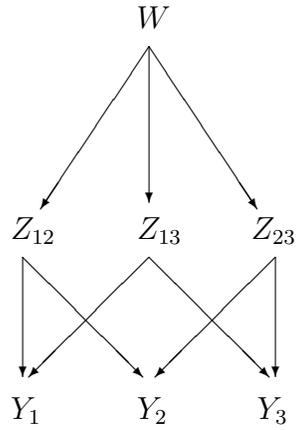

\begin{figure}[h]
\setlength{\unitlength}{0.8cm}
\begin{center}
\begin{picture}(6,4)
\put(0,0){\line(0,1){4}}
\put(6,0){\line(0,1){4}}
\put(0,0){\line(1,0){6}}
\put(0,2){\line(1,0){6}}
\put(0,4){\line(1,0){6}}
\put(2,4){\line(0,-1){2}}
\put(4,4){\line(0,-1){2}}
\put(3,2){\line(0,-1){2}}
\put(0.7,2.8){$Y_1$}
\put(2.7,2.8){$Y_2$}
\put(4.7,2.8){$Y_3$}
\put(1.2,0.8){$Y_4$}
\put(4.2,0.8){$Y_5$}
\end{picture}
\end{center}
\caption{Hypothetical region with five areas.}
\label{fig7:region} 
\end{figure}
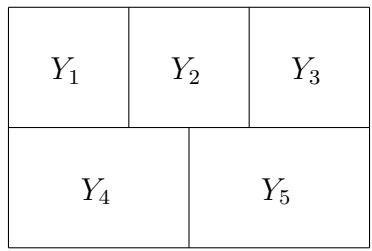

\begin{figure}
\centering
\includegraphics[scale=0.45]{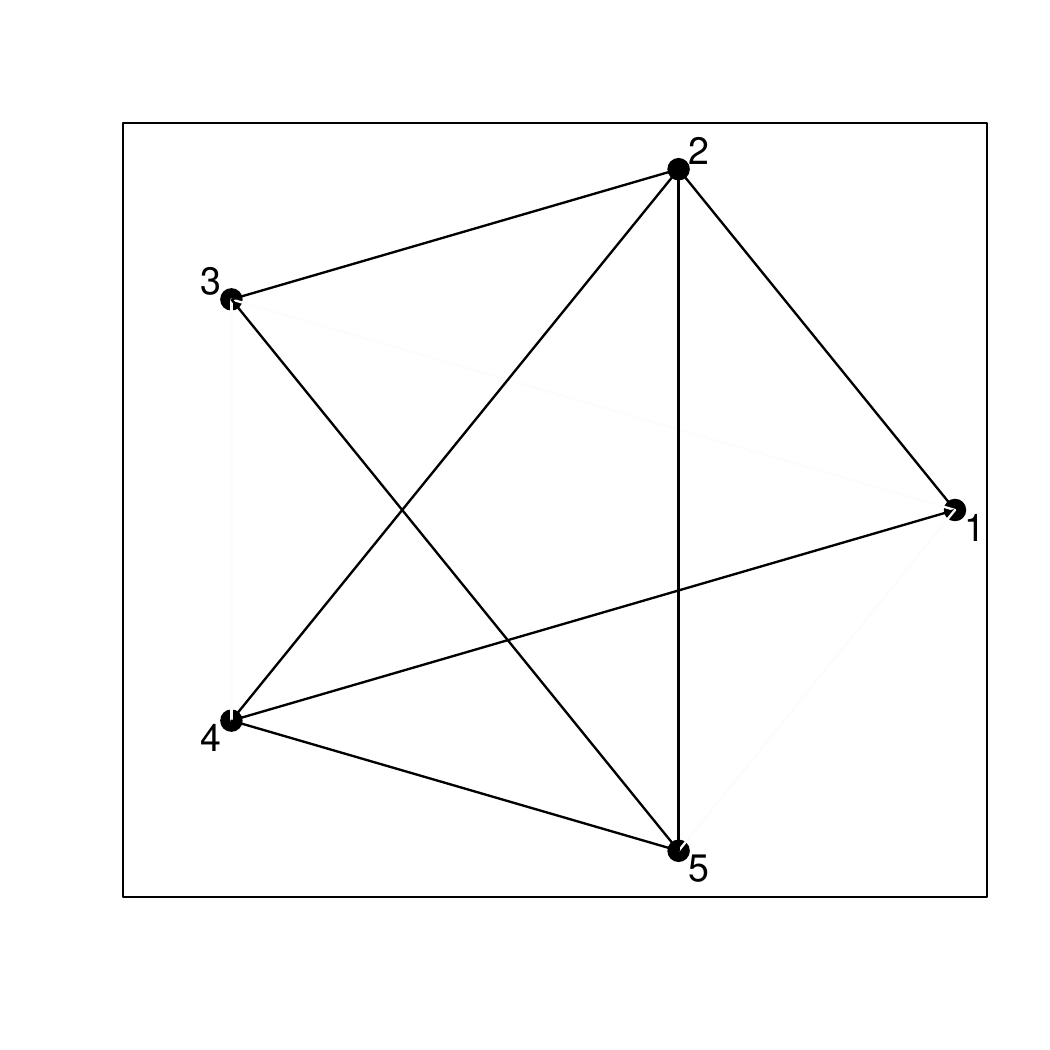}
\includegraphics[scale=0.45]{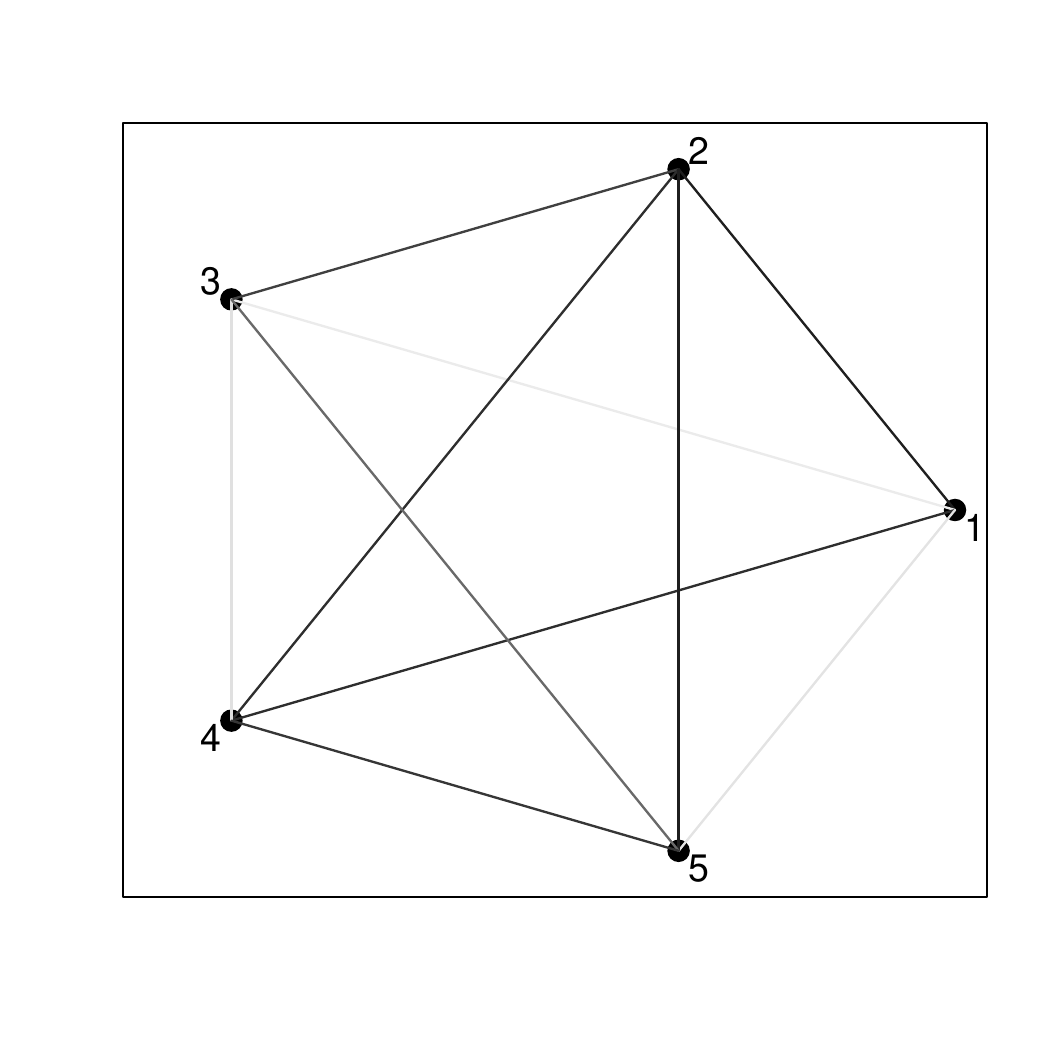}
\caption{Networks in simulation study. True (left), estimated with $n=500$ (right).}
\label{fig:simnet}
\end{figure}

\begin{figure}
\centering
\includegraphics[scale=0.9]{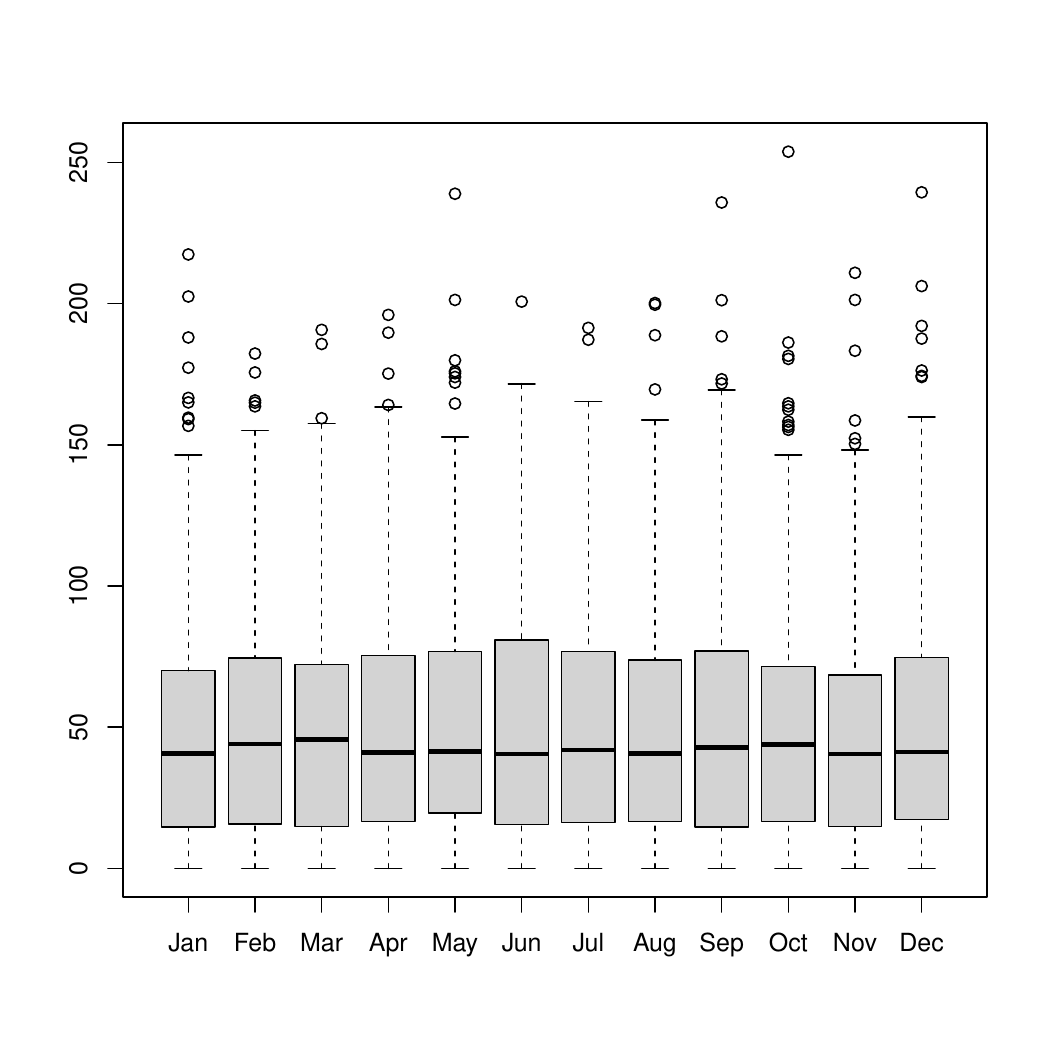}
\caption{Monthly box plots of the number of sunspots.}
\label{fig:sunbox}
\end{figure}

\begin{figure}
\centering
\includegraphics[scale=0.45]{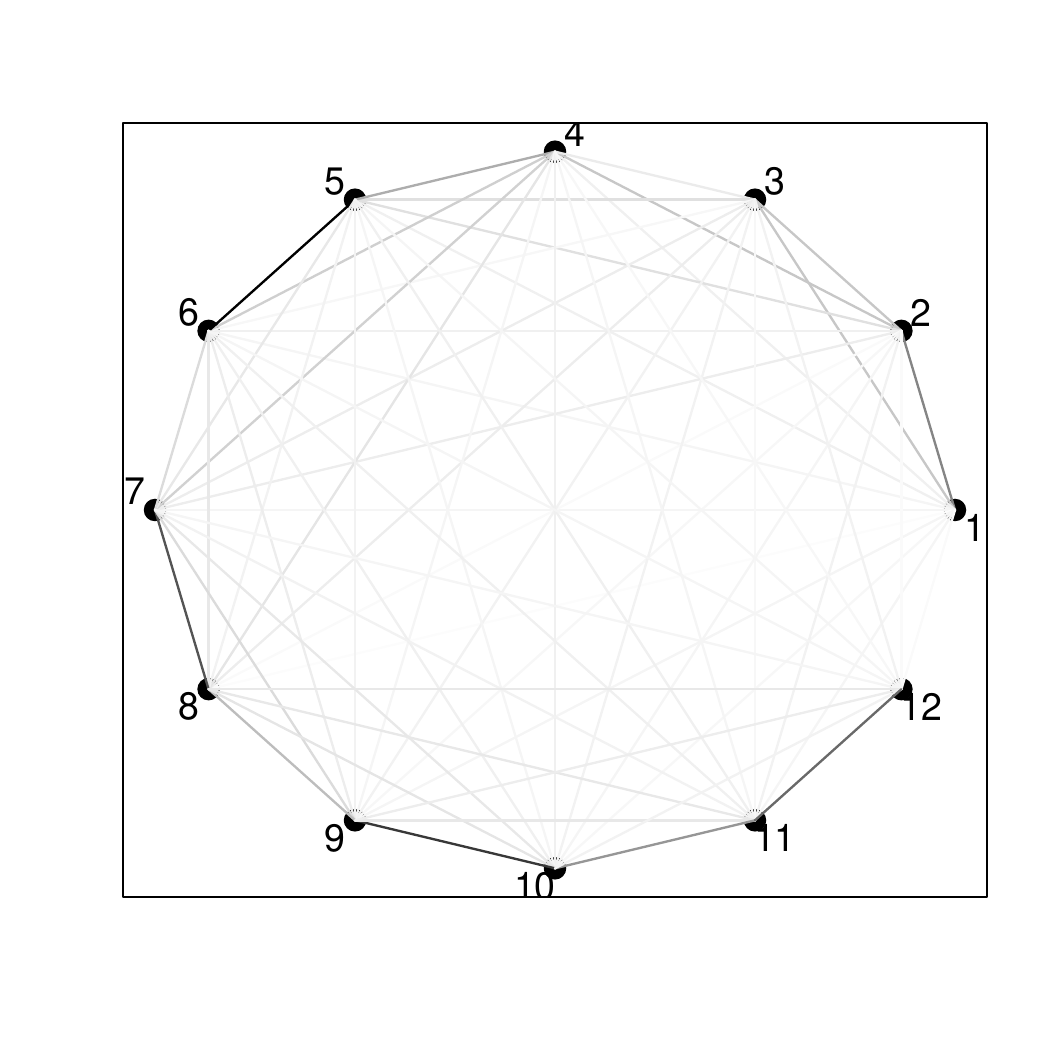}
\includegraphics[scale=0.45]{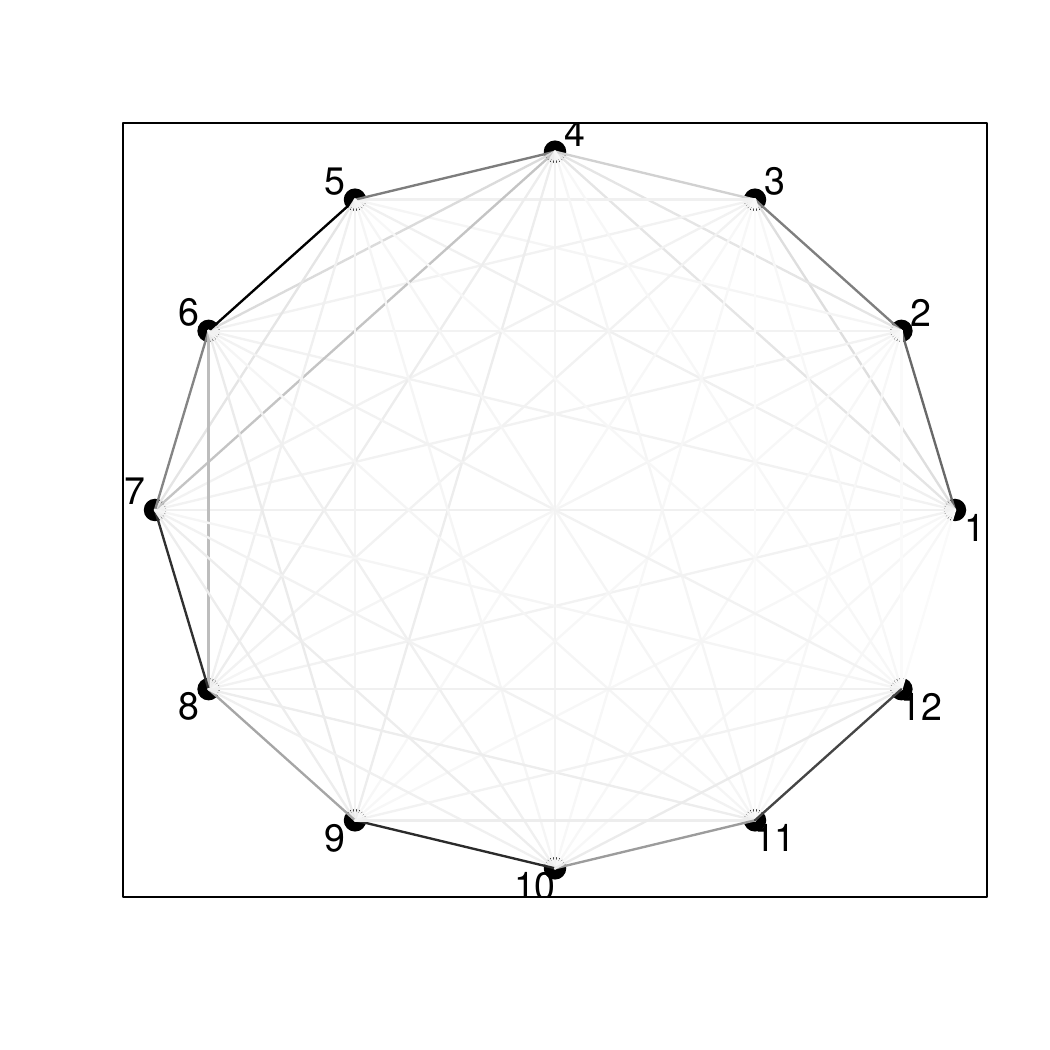}
\caption{Networks in sunspots dataset. Inverse gamma model (left), gamma model (right).}
\label{fig:sunnet}
\end{figure}

\begin{figure}
\centering
\includegraphics[scale=0.45]{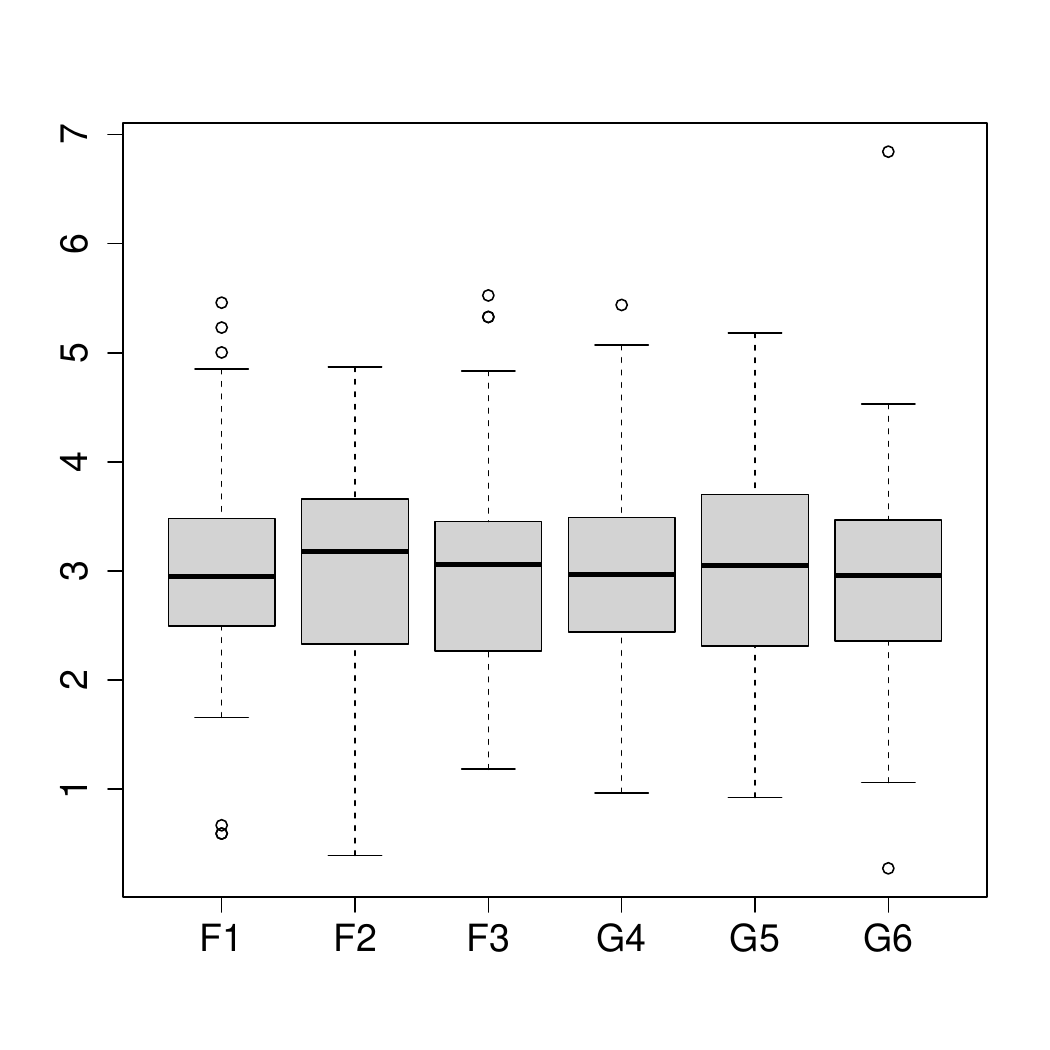}
\includegraphics[scale=0.45]{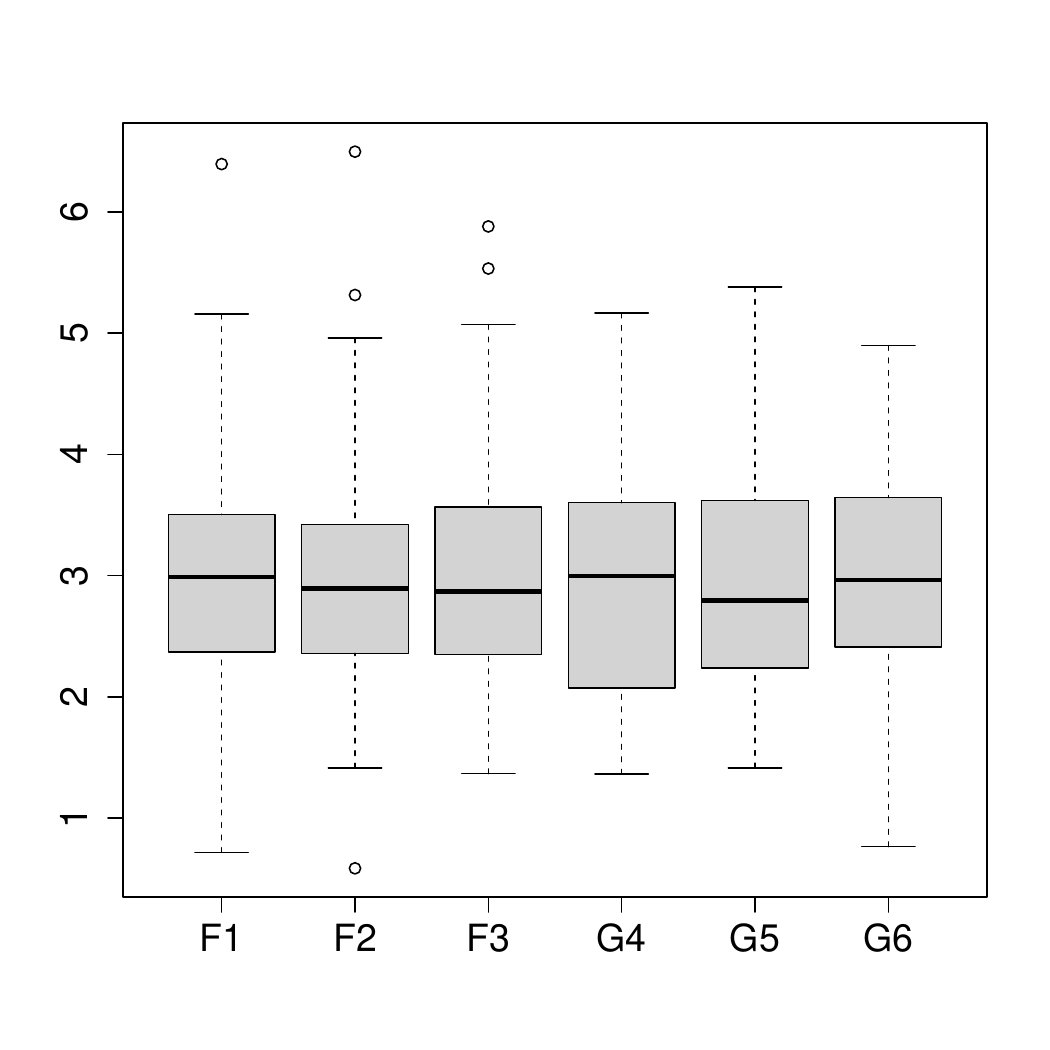}
\caption{Box plots of glucose blood test measurements at fasting (F) and after glucose intake for three years. Non-pregnant women (left), pregnant women (right).}
\label{fig:glubox}
\end{figure}

\begin{figure}
\centering
\includegraphics[scale=0.45]{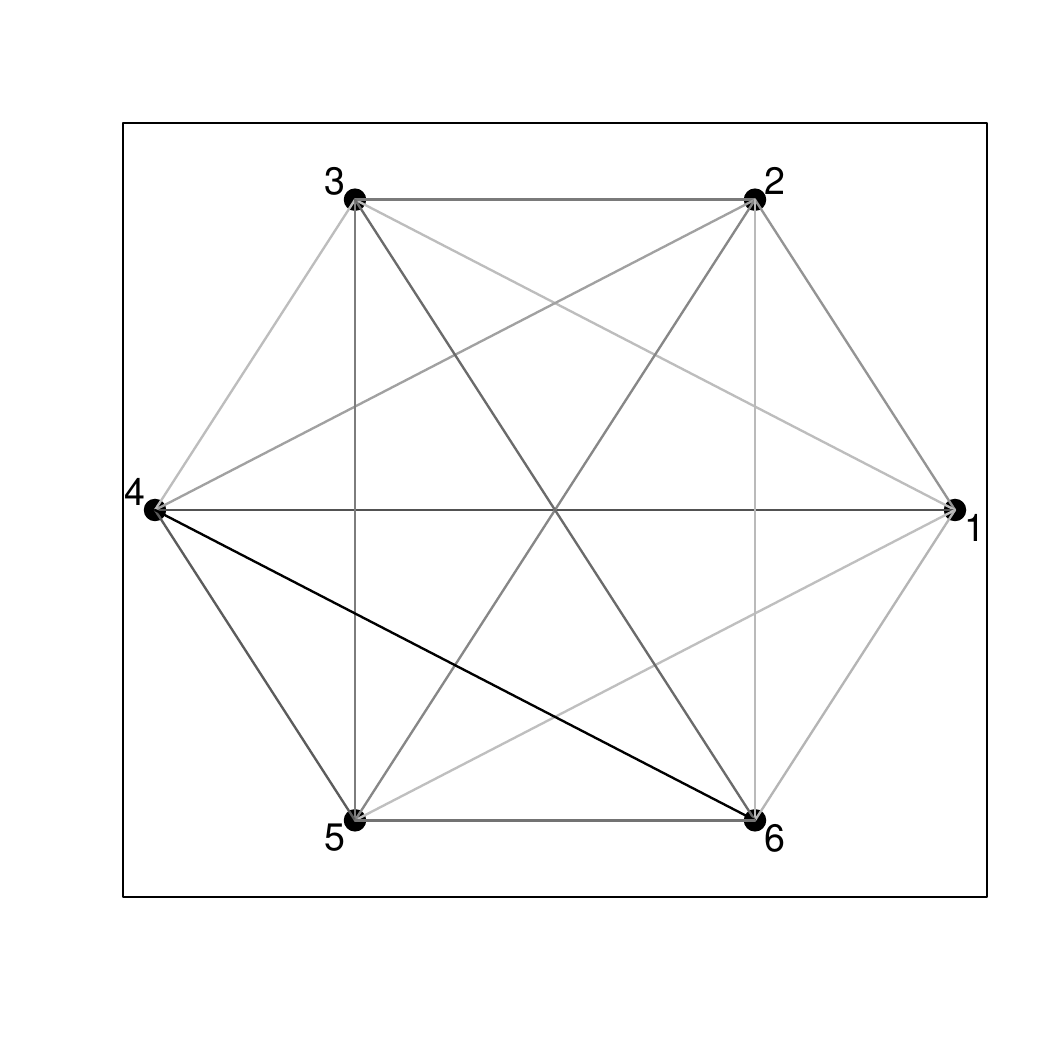}
\includegraphics[scale=0.45]{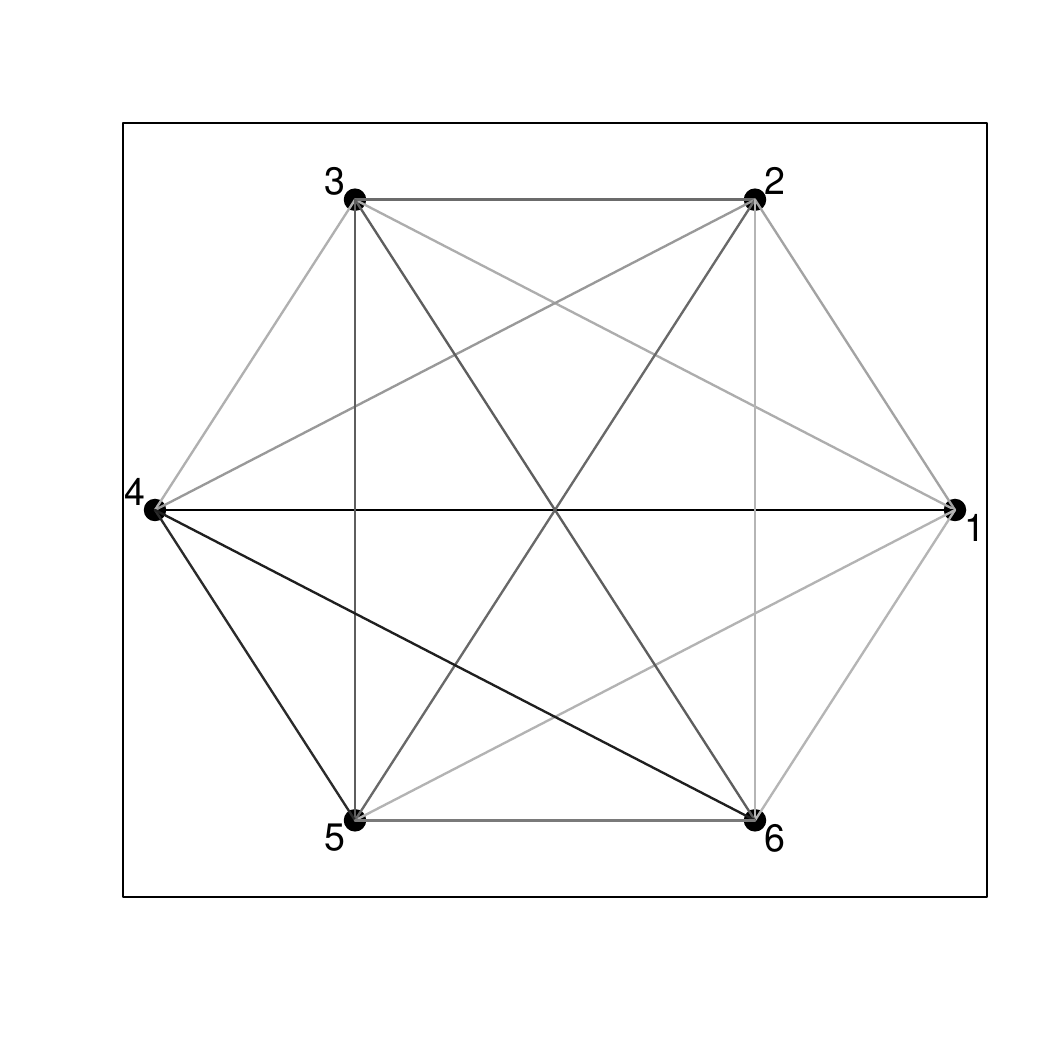}
\includegraphics[scale=0.45]{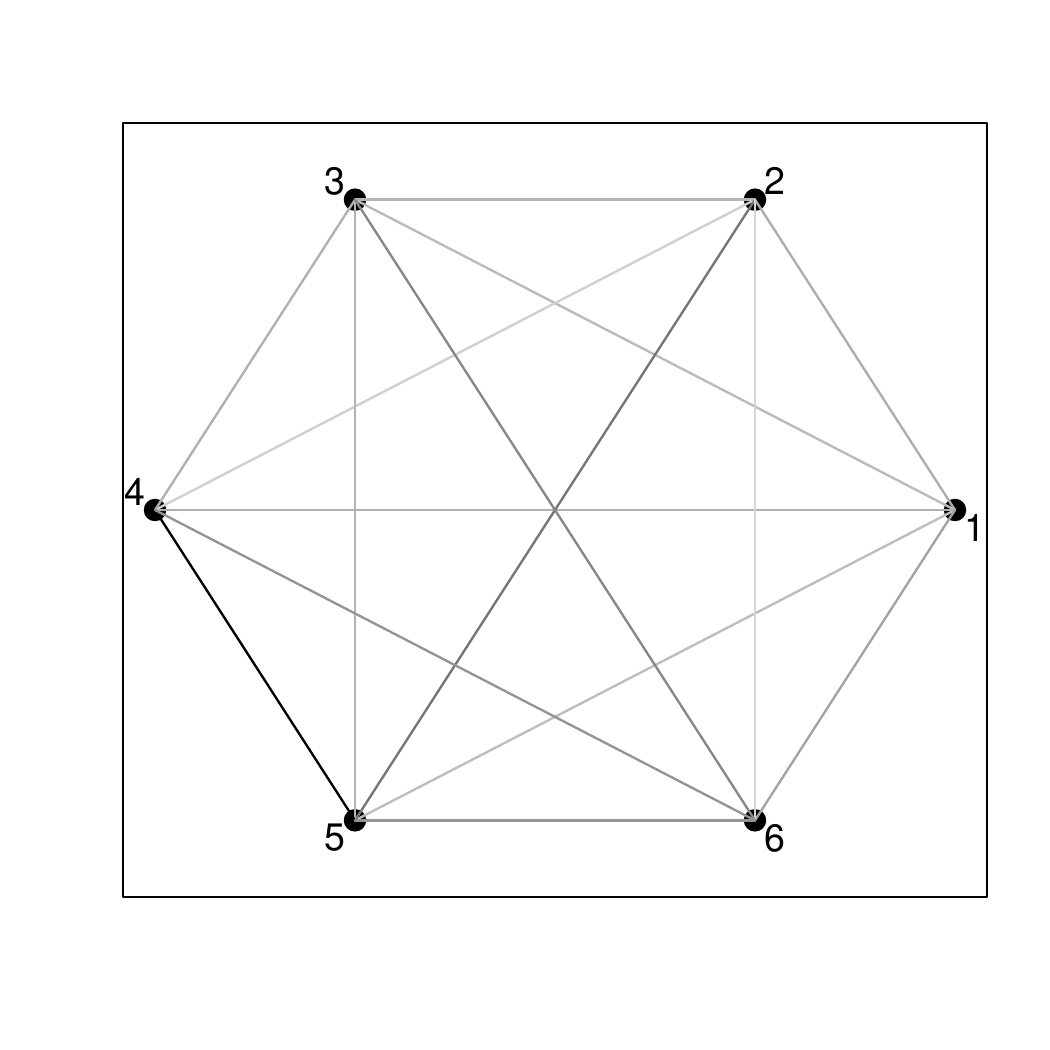}
\includegraphics[scale=0.45]{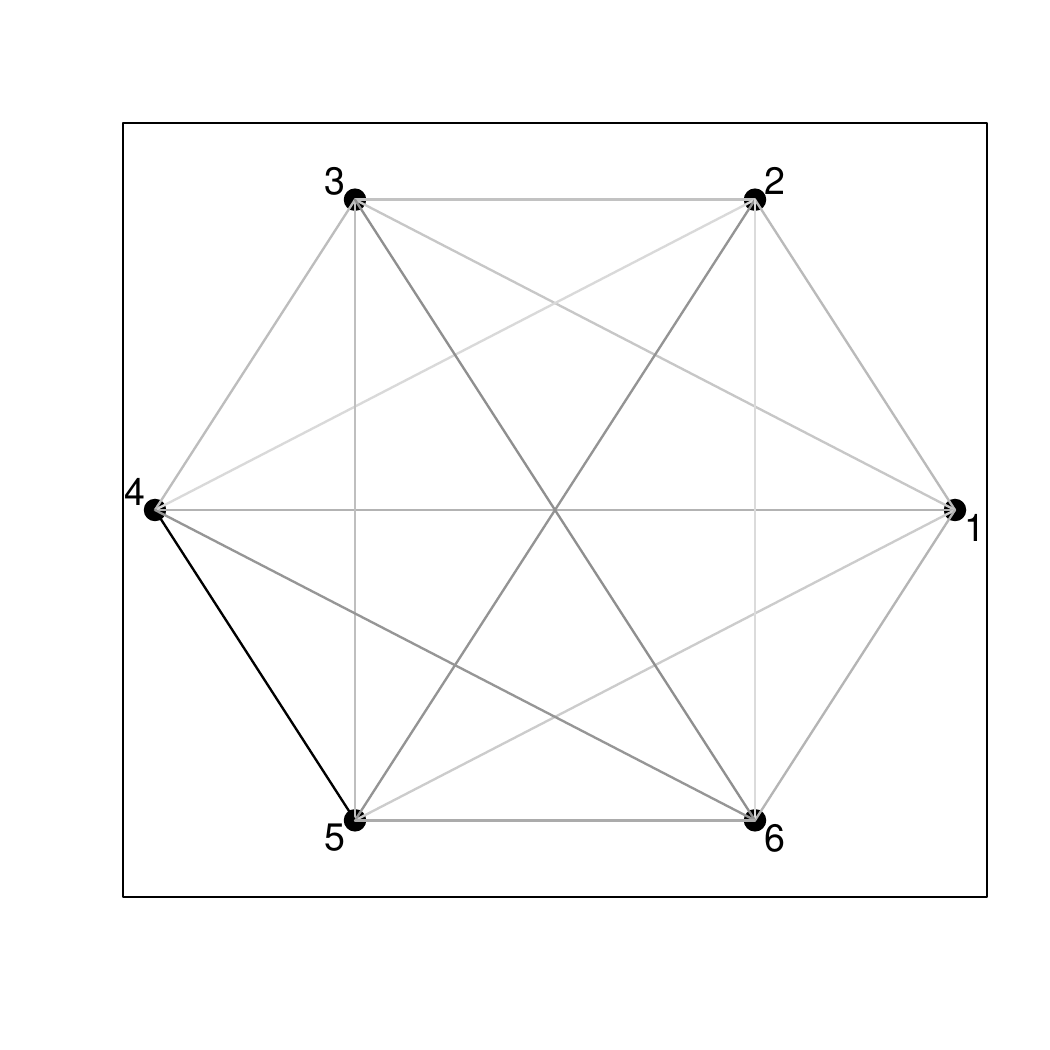}
\caption{Networks in glucose dataset. Gamma model (left column), normal model (right column). Non pregnant women (top row), pregnant women (bottom row).}
\label{fig:glunet}
\end{figure}

\end{document}